\newcommand{\comment}[1]{}
\newcommand{\lr}[1]{\left( #1\right)}
\newcommand{\mlr}[1]{\left[ #1\right]}
\newcommand{\norm}[1]{\left\lVert#1\right\rVert}
\newcommand{\abs}[1]{\left\lvert#1\right\rvert}
\newcommand{\ii}{\mathrm{i}}
\newcommand{\ee}{\mathrm{e}}
\newcommand{\tr}[1]{\mathrm{Tr}\lr{#1}}
\newcommand{\Tr}{\mathrm{Tr}}
\newcommand{\order}{\mathrm{O}}
\newcommand{\cZ}{\mathcal{Z}}
\newcommand{\W}{{\mathbf{W}}}
\newcommand{\V}{{\mathbf{V}}}
\newcommand{\poly}[1]{\mathrm{poly}\lr{#1} }
\newcommand{\fkd}{\mathfrak{d}}
\renewcommand{\p@subsection}{}
\renewcommand{\p@subsubsection}{}
\newtheorem{thm}{Theorem}
\newtheorem{cor}[thm]{Corollary}
\newtheorem{lem}[thm]{Lemma}
\newtheorem{defn}[thm]{Definition}
\begin{document} 

\title{Polynomial-time classical sampling of high-temperature quantum Gibbs states}
\author{Chao Yin}\email{chao.yin@colorado.edu}
\affiliation{Department of Physics and Center for Theory of Quantum Matter, University of Colorado, Boulder CO 80309, USA}

\author{Andrew Lucas}\email{andrew.j.lucas@colorado.edu}
\affiliation{Department of Physics and Center for Theory of Quantum Matter, University of Colorado, Boulder CO 80309, USA}

\date{\today}

\begin{abstract}
    The computational complexity of simulating quantum many-body systems generally scales exponentially with the number of particles.  This enormous computational cost prohibits first principles simulations of many important problems throughout science, ranging from simulating quantum chemistry to discovering the thermodynamic phase diagram of quantum materials or high-density neutron stars.   We present a classical algorithm that samples from a high-temperature quantum Gibbs state in a computational (product state) basis.  The runtime grows polynomially with the number of particles, while error vanishes polynomially.  This algorithm provides an alternative strategy to existing quantum Monte Carlo methods for overcoming the sign problem. 
    Our result implies that measurement-based quantum computation on a Gibbs state can provide exponential speed up only at sufficiently low temperature, and further constrains what tasks can be exponentially faster on quantum computers. 
\end{abstract}

\maketitle

\section{Introduction}
Many-body physics is greatly aided by the ability to use classical computation to perform calculations that cannot be done by hand.  When one wishes to calculate the properties of a system at finite temperature $T = 1/\beta$, it is desirable to have an algorithm that efficiently samples from a thermal (Gibbs) distribution, where  the probability $p(s)$ of finding state $s$ is given by \begin{equation} \label{eq:ps}
    p(s) = \frac{\langle s| \mathrm{e}^{-\beta H} |s\rangle}{\tr{\mathrm{e}^{-\beta H}}}.
\end{equation}
For simplicity in this paper, we consider states $|s\rangle$ that are bit strings of length $N$: i.e., the quantum system consists of $N$ interacting qubits.

Even in classical statistical mechanics, this \emph{Gibbs sampling} task is generically hard.  We can easily compute the numerator of (\ref{eq:ps}): $H(s)$ is a known \emph{function}.   But if ground states of $H$ encode solutions to an NP-hard optimization problem \cite{Barahona,Lucas_2014}, estimating the denominator of (\ref{eq:ps}) may take exponential runtime.  Hence, efficient sampling at low temperature is restricted to special ferromagnetic models without frustration \cite{samp2count86,ising_poly93,ising_samp99}.  In contrast, at high temperature, so long as $H$ contains only few-body interactions (with each spin not coupled to extensively many others), we can accurately sample from $p(s)$ at sufficiently small $\beta$ using standard Glauber/Metropolis dynamics \cite{levin}, using polynomial runtime. 

It has not been established whether or not quantum systems with few-body ($k$-local) interactions are easy to sample classically, even at high temperature.  We can immediately see why quantum mechanics makes this task harder: since $H$ may be a sum of local \emph{non-commuting} terms, evaluating both the numerator and the denominator of (\ref{eq:ps}) become hard. This remains true even if all that is desired is the thermal expectation value of a local observable: \begin{equation}\label{eq:thermalcorr}
        \langle A\rangle_\beta := \frac{\tr{A \mathrm{e}^{-\beta H}}}{\tr{\mathrm{e}^{-\beta H}}}.
    \end{equation}
A common setting where this expectation value is notoriously hard is in systems with interacting bosons and fermions.  Integrating out fermions, one finds formally exact expressions for both the numerator and denominator of (\ref{eq:thermalcorr}).  However, these expressions cannot be evaluated easily, as they are over rapidly oscillating functions causing the \emph{sign problem} \cite{troyer}.  Nevertheless, this numerical method (called \emph{quantum Monte Carlo}) is often the state-of-the-art, even though its runtime could be exponential in $N$ for any $\beta>0$ \cite{dornheim}.  The generic challenge of quantum Monte Carlo makes it hard to study a broad array problems, such as simulating quantum chemical reactions \cite{anderson}, or determining the phase diagrams of high-density nuclear matter \cite{creutz,carlson} or  quantum materials \cite{hubbard}. 


So far, rigorous results exist for quantum systems remain limited.   There are quantum systems, e.g. systems without a sign problem, where one can prove that quantum Monte Carlo -- including classical sampling of $p(s)$ --  is efficient \cite{bravyi2007complexity,MBQC_thermal09,QMC_ferro17,Li_2019,QMC_Ising20,QMC_1d21}.  A first \emph{generic} rigorous locality result for high-temperature Gibbs states was \cite{local_T14}, which proves that connected correlators decay exponentially. As an improvement,
\cite{cond_mutu_info20} argues that high-temperature Gibbs states are approximate Markov networks. Combining with \cite{Brandao_prepare19}, this yields a $\log N$ depth \emph{quantum} algorithm to prepare the state. However, it does not guarantee the state can be sampled classically in polynomial time, since even depth-3 circuits are known to be exponentially hard to simulate on a classical computer \cite{depth3_hard04} in the worst-case.  \cite{Harrow_complex0_20} also studied the problem of classically sampling a quantum Gibbs state, but did not obtain a polynomial-time algorithm.
Further efforts to understand the computational complexity of characterizing quantum Gibbs states can be found in \cite{bravyi_denseH22,thermal_tensor_local21,thermal_cor1d22,Chen_prepare23}, and the review \cite{thermal_rev22}.  

  The purpose of this work is to show that, for arbitrary quantum systems with geometrically local few-body interactions, classical sampling of quantum systems can be done in polynomial time in $N$, at sufficiently small $\beta = \mathrm{O}(1)$ (i.e. $\beta$ does not need to vanish at large $N$).  Since, as explained above, at sufficiently large $\beta = \mathrm{O}(1)$, quantum systems must be hard to sample in general, our work establishes a relatively abrupt transition from easy to hard to simulate. Our result provides strong constraints on what tasks a future quantum computer may perform more efficiently than a classical computer.




\section{Main results}

We consider a set of $N$ qubits (two-level systems), $\Lambda=\{1,\cdots,N\}$, placed on the vertices of a graph $G$, whose maximal degree $K=\mathrm{O}(1)$ (does not scale with $N$). This includes constant-dimensional lattices, used to simulate lattice gauge theories or correlated electrons, as special cases. We define the distance $\mathsf{d}(i,j)$ between qubits $i$ and $j$ as the minimal number of edges contained in a path connecting $i$ and $j$. This induces the diameter of any set of qubits $S\subset \Lambda$ to be defined as ${\rm diam}(S) := \max_{i,j\in S}\mathsf{d}(i,j)$. 

We assume the qubits interact via local Hamiltonian \begin{equation}\label{eq:H=Ha}
    H = \textstyle{\sum_{a}} \lambda_a H_a,
\end{equation}
where each term $H_a$ is a Pauli string acting on a set $S_a\subset \Lambda$ of diameter no larger than some constant, with $\abs{S_a}\le k$.  In computer science, such Hamiltonians are called $k$-local. We further assume that each term is bounded ($\abs{\lambda_a}\le 1$), and that single-site Pauli matrices $\{X_1,Y_1,Z_1,X_2,\cdots\}$ are all included in $\{H_a\}$. As a result, there exists a constant $\fkd$ such that, each term $H_a$ is supported on a local region that overlaps with the support of at most $\fkd$ other terms, with $\fkd < 4^k \Delta! / k!(\Delta-k)!$ where $\Delta$ is the maximal degree of a vertex (the number of qubits $v$ for which -- for fixed $u$ -- we can find $\lbrace u,v\rbrace \subseteq S_a$ for some set $S_a$). By considering a single-site Pauli as one of the allowed $H_a$ terms, we see that a (potentially loose) upper bound on the number of $a$s in \eqref{eq:H=Ha} is $\fkd N$. While the discussion of this paper focuses on qubits, our results can be generalized to other finite-dimensional quantum systems. 

 Let us assume for now (we will generalize later) that we wish to sample from a probability distribution obtained by the following thought experiment.  We draw a state $|\psi\rangle$ from the quantum Gibbs ensemble $\mathrm{e}^{-\beta H}$ (up to normalization), and then simultaneously measure the commuting Paulis $Z_1,\ldots, Z_N$. This collapses the wave function into the computational basis state $|x_1\cdots x_N \rangle := |\bm{x}\rangle $.  We use the convention that (for one qubit) $Z|0\rangle = |0\rangle$, $Z|1\rangle = -1$.  Denoting $E_{0_i} = \frac{1+Z_i}{2}$ and $E_{1_i} = \frac{1-Z_i}{2}$, our goal is  to sample the binary string $\bm{x}$ from the probability distribution \begin{equation}\label{eq:px=}
    p(\bm{x}) := \frac{\tr{E_{x_1}\cdots E_{x_N} \mathrm{e}^{-\beta H}}}{\tr{\mathrm{e}^{-\beta H}}}.
\end{equation}
Our main result is that for sufficiently small $\beta$ (high temperature), this task is ``easy" on a classical computer:

\begin{thm}\label{thm:main}
    Suppose \begin{equation}\label{eq:beta*}
        \beta < \beta_* := \mlr{ 2\ee^2 \fkd(\fkd+1) }^{-1}.
    \end{equation}
    There exists a classical algorithm with runtime $t\sim \mathrm{poly}(N)$, that samples the measurement outcome $\bm{x}$ from probability distribution $p'(\bm{x})$ that is close to the true $p(\bm{x})$ in total variation distance \begin{equation}\label{eq:p'-p<poly}
        \textstyle{\sum_{\bm{x}} } \abs{p'(\bm{x})-p(\bm{x})} =: \norm{p'-p}_1 \le 1/\poly{N}.
    \end{equation}
\end{thm}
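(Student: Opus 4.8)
The plan is to sample $\bm{x}$ one bit at a time, exploiting the chain rule for probability. Fixing an ordering $1,2,\ldots,N$ of the qubits, I would draw $x_1$ from its marginal, then $x_2$ from $p(x_2\mid x_1)$, and in general draw $x_i$ from
\begin{equation}
p(x_i\mid x_1,\ldots,x_{i-1}) = \frac{\tr{E_{x_1}\cdots E_{x_i}\,\ee^{-\beta H}}}{\tr{E_{x_1}\cdots E_{x_{i-1}}\,\ee^{-\beta H}}}.
\end{equation}
This reduces sampling to the task of \emph{computing} these conditional probabilities: if each can be evaluated to additive error $\varepsilon$ in time $\poly{N}$, then one forward pass produces a sample, and a telescoping chain-rule argument bounds $\norm{p'-p}_1$ by roughly $N\varepsilon$, so $\varepsilon = 1/\poly{N}$ suffices for \eqref{eq:p'-p<poly}.

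The central claim I would establish is that each conditional probability is, up to exponentially small error, a \emph{local} quantity: it depends only on the measured bits and Hamiltonian terms supported within a ball $B_i$ of radius $r=\order(\log N)$ about qubit $i$. To prove this I would Taylor-expand $\ee^{-\beta H}=\sum_m \frac{(-\beta)^m}{m!}\sum_{a_1,\ldots,a_m}\lambda_{a_1}\cdots\lambda_{a_m}H_{a_1}\cdots H_{a_m}$ inside each trace and organize the trace-nonvanishing monomials into connected clusters of overlapping terms. The logarithm of each trace then admits a convergent cluster (polymer) expansion, and in the ratio above all clusters avoiding qubit $i$ cancel between numerator and denominator, since the two traces differ only by the extra projector $E_{x_i}$ localized at $i$. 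What survives is a sum over clusters anchored at $i$; because the terms are geometrically local, a connected cluster of diameter $\ge r$ must contain $\gtrsim r$ terms, so its weight is suppressed like $(\beta/\beta_*)^{r}$, yielding the exponential locality. The threshold $\beta_* = [2\ee^2\fkd(\fkd+1)]^{-1}$ is precisely the convergence radius obtained by bounding the number of connected clusters of a given size through $\fkd$, the maximal number of terms overlapping a fixed one.

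Given locality, the algorithm is immediate: to evaluate the truncated conditional at qubit $i$ I restrict to $B_i$, which contains $\order(\log N)$ qubits, and compute the ratio of traces of $\order(\log N)$-qubit operators by brute-force diagonalization. This costs $2^{\order(\log N)}=\poly{N}$ per bit, hence $\poly{N}$ overall. Choosing $r=c\log N$ with $c$ large enough makes the per-bit truncation error $1/\poly{N}$, and one renormalizes each truncated conditional so that the distribution $p'(\bm{x})$ actually sampled is a genuine (nonnegative, normalized) probability distribution.

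The main obstacle is the second step: proving absolute convergence of the cluster expansion for the \emph{ratio} of traces, including the measurement projectors $E_{x_j}=(1\pm Z_j)/2$, which are not Hamiltonian terms and must be incorporated as additional polymers, and extracting a clean exponential-decay bound with the stated constant. Because the $H_a$ do not commute, the expansion is over ordered operator products rather than a classical polymer gas, so the combinatorial bookkeeping — counting connected clusters weighted by operator norms, and controlling the $m!$ of the Taylor series against the number of monomials — is where the condition $\beta<\beta_*$ is genuinely used and where the real work lies. A secondary technical point is verifying that truncation and renormalization errors combine additively across the $N$ sequential steps rather than compounding.
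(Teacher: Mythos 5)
Your overall architecture matches the paper's: chain-rule sequential sampling plus the telescoping bound $\norm{p'-p}_1 \le 2\epsilon N$ (the paper's Lemma 3), and a cluster expansion in which contributions not anchored at the current qubit cancel between numerator and denominator. The paper implements that cancellation slightly differently than you describe -- it writes $2p(0|\bm{x}_n)-1 = \left.\partial_\kappa \log \tr{E_n \ee^{\kappa E_j}\ee^{-\beta H}}\right|_{\kappa=0}$, so that clusters avoiding $j$ contribute an overall $\cosh\kappa$ factor killed by the derivative -- but this is morally the same cancellation you invoke, and your identification of $\beta_*$ as a cluster-counting convergence radius and of the truncation error $(\beta/\beta_*)^{\Omega(r)}$ are both consistent with the paper.

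There is, however, a genuine gap in your algorithmic step. You claim the ball $B_i$ of radius $r=\order(\log N)$ contains $\order(\log N)$ qubits, so brute-force diagonalization costs $2^{\order(\log N)}=\poly{N}$. That is true only in one dimension. On a $d$-dimensional lattice with $d\ge 2$ one has $|B_i|=\Theta(\log^d N)$, so the cost $2^{\Theta(\log^d N)}$ is quasi-polynomial; on a general bounded-degree graph (the paper's actual setting, which includes expanders) $|B_i|$ can be polynomial in $N$ and the cost exponential. The paper never restricts to a metric ball: it truncates the expansion at cluster weight $m\le M=\eta\log N$ and computes each coefficient $\gamma_m$ cluster by cluster. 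The crucial observation you are missing is that a \emph{connected cluster} of $m$ local terms is supported on at most $1+m(k-1)=\order(\log N)$ qubits -- vastly fewer than a radius-$m$ ball -- so each cluster's contribution costs $2^{\order(m)}$, and there are at most $C_m\le(\ee\fkd)^m$ connected clusters containing $j$, enumerable in time $\order(m\fkd C_m)$, giving $\poly{N}$ total. Your diagonalization step must be replaced by this cluster-by-cluster computation (or an equivalent). Secondarily, you defer the convergence of the expansion with projector insertions as ``where the real work lies''; the paper's resolution is that the projectors need not be treated as extra polymers at all: since $E_n$ is a tensor product of single-site projectors, traces still factorize across disconnected clusters (so $\varGamma_\W=0$ for disconnected $\W$ survives the insertion), and H\"older's inequality gives $\abs{\lr{\mathcal{D}_\V\cZ}_0}\le\beta^{|\V|}$ exactly as without the insertion after normalizing by $\tr{E_n}=2^{N-n}$, so the combinatorial bounds of the unconditioned case carry over unchanged.
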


Let observable $A$ be supported on a set $R$ containing finitely many vertices.  By just explicitly evaluating \begin{equation}
\langle A\rangle \approx \sum_{\bm{x}} p'(\bm{x}) \langle \bm{x}|A|\bm{x}\rangle,  
\end{equation}
Theorem \ref{thm:main} implies:
\begin{cor}\label{cor:local}
    For arbitrary local operator $A$, there exists a classical algorithm with runtime $t\sim \mathrm{poly}(N)$ to calculate thermal expectation values (\ref{eq:thermalcorr})
    up to an inverse-polynomial error.
\end{cor}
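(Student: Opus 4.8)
The plan is to reduce Corollary \ref{cor:local} to Theorem \ref{thm:main} by controlling the error incurred when we replace the true distribution $p(\bm{x})$ by the samplable approximation $p'(\bm{x})$. First I would write the exact thermal expectation value in the form
\begin{equation}\label{eq:Aexact}
    \langle A\rangle_\beta = \sum_{\bm{x}} p(\bm{x})\, \langle \bm{x}|A|\bm{x}\rangle + \Delta_{\mathrm{off}},
\end{equation}
where $\Delta_{\mathrm{off}}$ collects the off-diagonal matrix elements $\langle \bm{x}|A|\bm{x}'\rangle$ with $\bm{x}\neq\bm{x}'$. Here the key observation is that $A$ is supported on the finite set $R$, so $\langle \bm{x}|A|\bm{x}'\rangle$ vanishes unless $\bm{x},\bm{x}'$ agree outside $R$; one then argues that the diagonal part $\sum_{\bm{x}} p(\bm{x}) \langle\bm{x}|A|\bm{x}\rangle$ is precisely what the Monte Carlo estimator converges to, and that the off-diagonal contribution is itself a thermal expectation value of an $\order{1}$-local operator that the sampling machinery of Theorem \ref{thm:main} already accounts for (or, if one only wants diagonal observables such as products of $Z_i$, that $\Delta_{\mathrm{off}}=0$ identically).

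Next I would bound the sampling error. Writing $\widehat{A} := \sum_{\bm{x}} p'(\bm{x}) \langle\bm{x}|A|\bm{x}\rangle$ for the quantity the algorithm actually evaluates, the triangle inequality gives
\begin{equation}\label{eq:Aerr}
    \abs{\widehat{A} - \sum_{\bm{x}} p(\bm{x}) \langle\bm{x}|A|\bm{x}\rangle} \le \norm{A}_{\mathrm{op}} \sum_{\bm{x}} \abs{p'(\bm{x})-p(\bm{x})} = \norm{A}_{\mathrm{op}} \norm{p'-p}_1,
\end{equation}
since $\abs{\langle\bm{x}|A|\bm{x}\rangle}\le \norm{A}_{\mathrm{op}}$ for every $\bm{x}$. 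Because $A$ acts on finitely many sites, $\norm{A}_{\mathrm{op}} = \order{1}$, and Theorem \ref{thm:main} supplies $\norm{p'-p}_1 \le 1/\poly{N}$; hence the deterministic truncation error is $1/\poly{N}$.

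The remaining issue is purely algorithmic: the sum $\sum_{\bm{x}} p'(\bm{x}) \langle\bm{x}|A|\bm{x}\rangle$ ranges over exponentially many strings, so I would estimate it by Monte Carlo rather than evaluate it exactly. Drawing $M = \poly{N}$ independent samples $\bm{x}^{(1)},\ldots,\bm{x}^{(M)}$ from $p'$ (each in $\poly{N}$ time by Theorem \ref{thm:main}) and forming the empirical mean $\frac{1}{M}\sum_m \langle \bm{x}^{(m)}|A|\bm{x}^{(m)}\rangle$, the variance is bounded by $\norm{A}_{\mathrm{op}}^2 = \order{1}$, so Chebyshev (or Hoeffding, since each term lies in $[-\norm{A}_{\mathrm{op}},\norm{A}_{\mathrm{op}}]$) gives a statistical error $\order{M^{-1/2}}$, which is again $1/\poly{N}$ for $M=\poly{N}$. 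Combining this with \eqref{eq:Aerr} and the off-diagonal bookkeeping in \eqref{eq:Aexact} yields the claimed inverse-polynomial accuracy in total runtime $\poly{N}$. The main obstacle I anticipate is the off-diagonal term $\Delta_{\mathrm{off}}$: for a general local $A$ the naive diagonal estimator computes $\langle\bm{x}|A|\bm{x}\rangle$, which captures only the diagonal of $A$ in the $Z$-basis, so one must either restrict to diagonal observables or separately handle the off-diagonal part --- for instance by conjugating the basis, i.e. rotating $A$ into a diagonal form on its $\order{1}$-site support and invoking Theorem \ref{thm:main} for the correspondingly rotated local measurement, which remains valid since the support of $A$ is of constant size.
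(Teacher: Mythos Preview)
Your approach is essentially the paper's: estimate $\langle A\rangle_\beta$ by $\sum_{\bm{x}} p'(\bm{x})\langle\bm{x}|A|\bm{x}\rangle$ and control the error via $\norm{p'-p}_1$ from Theorem~\ref{thm:main}. In fact you are more careful than the paper's one-line sketch on two points it glosses over --- the need for Monte Carlo rather than ``explicit'' evaluation of the exponential sum, and the off-diagonal term $\Delta_{\mathrm{off}}$ --- and your proposed fix for the latter (rotate the measurement basis on the $\order{1}$-site support of $A$) is exactly what the paper's later generalization to adaptive local bases (Section on adaptive protocols, with Lemma~\ref{lem:clus} proved in that generality) licenses.
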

Corollary \ref{cor:local} was previously argued for in \cite{Brandao_prepare19}, although their methods did not establish our stronger result Theorem \ref{thm:main}.  Indeed, our Theorem \ref{thm:main} implies that \emph{generic quantum Monte Carlo calculations} can be done in polynomial time, at high enough temperature.  

In a related fashion, while it was shown that thermal states at sufficiently low temperature can be resources for universal measurement-based quantum computation \cite{MBQC_thermal09,MBQC_thermal11,MBQC_thermal13}, the following theorem forbids this possibility at high temperature, by extending \cite{local_T14} to show that \emph{even after measuring arbitrarily many qubits}, there are no long-range correlations in the resulting quantum state.

\begin{thm}\label{thm:nocorr}
At high-temperature \eqref{eq:beta*}, after projecting $n$ qubits onto fixed computational basis states using $E_n$, two unmeasured qubits $i,j$ have correlation exponentially small in their distance
\begin{align}\label{eq:cor<}
    &\mathrm{Cor}(i,j):=\max_{E_i,E_j} \frac{ \tr{E_iE_j E_n \ee^{-\beta H}} }{ \tr{E_n \ee^{-\beta H}}}- \nonumber\\ & \frac{ \tr{E_iE_n \ee^{-\beta H}} }{ \tr{E_n \ee^{-\beta H}}}
    \frac{ \tr{E_j E_n \ee^{-\beta H}} }{ \tr{E_n \ee^{-\beta H}}} \le c'' \lr{\frac{\beta}{\beta_*}}^{c'\mathsf{d}(i,j)},
\end{align}
for some constants $c',c''>0$,
where $E_i,E_j$ act on $i$ and $j$ respectively with $\norm{E_i},\norm{E_j}\le 1$.
\end{thm}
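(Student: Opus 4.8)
The plan is to reduce the post-measurement correlation to an ordinary connected correlation function in a genuine density matrix, and then bound it using a convergent cluster expansion that extends the strategy of \cite{local_T14} to the measured state.

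First I would recast \eqref{eq:cor<} as an ordinary connected correlator. Since the measured qubits are disjoint from $i,j$, the operators $E_n$, $E_i$, $E_j$ pairwise commute and $E_n^2=E_n$; writing $\ee^{-\beta H}=\lr{\ee^{-\beta H/2}}^2$ gives $E_n\ee^{-\beta H}E_n=\lr{\ee^{-\beta H/2}E_n}^\dagger \lr{\ee^{-\beta H/2}E_n}\succeq0$, so
\[
\rho_n:=\frac{E_n\ee^{-\beta H}E_n}{\tr{E_n\ee^{-\beta H}}}
\]
is a bona fide density matrix. Using cyclicity and $[E_n,E_iE_j]=0$, each of the three ratios in \eqref{eq:cor<} equals $\tr{E_iE_j\rho_n}$, $\tr{E_i\rho_n}$, $\tr{E_j\rho_n}$ respectively, so $\mathrm{Cor}(i,j)=\max_{E_i,E_j}\lr{\tr{E_iE_j\rho_n}-\tr{E_i\rho_n}\tr{E_j\rho_n}}$ is exactly the connected correlation of two commuting, norm-$\le1$ local operators in $\rho_n$.

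Next I would Taylor expand $\ee^{-\beta H}=\sum_m \frac{(-\beta)^m}{m!}\lr{\sum_a\lambda_aH_a}^m$ in both numerator and denominator, so that every term is an ordered product $H_{a_1}\cdots H_{a_m}$ of weight $(-\beta)^m\lambda_{a_1}\cdots\lambda_{a_m}/m!$, while the projectors $E_n$ (and $E_i,E_j$) appear as bounded single-site insertions. Passing to the logarithm of the unnormalized trace and to the associated linked-cluster expansion, the disconnected contributions cancel between the joint and the product terms, leaving $\mathrm{Cor}(i,j)$ as a sum over \emph{connected} clusters of Hamiltonian terms whose combined support is a connected subgraph of $G$ touching both $i$ and $j$. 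Because each $H_a$ has bounded-diameter support and overlaps at most $\fkd$ others, any such cluster must contain at least $c'\mathsf{d}(i,j)$ terms, each carrying a factor $\beta$. A standard Mayer-type convergence estimate then sums over clusters: the series converges once the effective branching weight is below one, and the stated threshold $\beta<\beta_*=\mlr{2\ee^2\fkd(\fkd+1)}^{-1}$ is precisely this condition (the $\fkd+1$ counting a term together with its neighbors, the $2\ee^2$ absorbing the $1/m!$ and the ordering multiplicities). The minimal cluster size then produces a geometric series bounded by $c''\lr{\beta/\beta_*}^{c'\mathsf{d}(i,j)}$, which is \eqref{eq:cor<}.

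The main obstacle is controlling the projectors inside the convergence estimate. I must verify that absorbing the $E_n$ insertions into the cluster weights neither shrinks the radius of convergence $\beta_*$ nor introduces long-range terms: intuitively the measured sites act as purely local boundary conditions, since each $E_{x_v}$ is a single-site operator with $\norm{E_{x_v}}\le1$ and therefore cannot bridge two otherwise disconnected regions. Making this precise --- showing that the projectors only reweight clusters locally, so that measuring arbitrarily many qubits leaves the connected-cluster bound and the threshold $\beta_*$ intact --- is the crux of the argument, and is exactly where the extension beyond \cite{local_T14} requires genuinely new bookkeeping.
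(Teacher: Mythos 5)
Your strategy is in essence the paper's own: introduce the measured-state partition function, pass to its logarithm so that disconnected contributions cancel, reduce $\mathrm{Cor}(i,j)$ to a sum over connected clusters that must contain both $i$ and $j$ (hence have weight $m=\Omega(\mathsf{d}(i,j))$), and sum a geometric series below $\beta_*$. (The paper implements your cancellation step explicitly by adding source fields, writing $\mathrm{Cor}(i,j)=\lr{\partial_{\kappa_i}\partial_{\kappa_j}\log\cZ'}_0$ with $\cZ'$ defined as in \eqref{eq:cZ=} but with $\kappa E_j\to\kappa_i E_i+\kappa_j E_j$; your preliminary rewriting in terms of the density matrix $\rho_n$ is correct but plays no role afterwards. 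Also, the relevant machinery is the cluster expansion of \cite{learn_highT21} as developed in Appendix \ref{app:A}, rather than an extension of \cite{local_T14}.)

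However, you explicitly leave unproven the step you correctly identify as the crux --- that the $E_n$ insertions neither break the connectedness cancellation nor shrink the radius of convergence --- so the proposal as written is incomplete precisely where the theorem's content lies. The paper closes this gap with two short arguments, neither requiring ``genuinely new bookkeeping.'' First, because $E_n$ is a tensor product of single-site projectors, the trace with insertions factorizes across disconnected supports as in \eqref{eq:disconn_factor} (the two-site analogue of \eqref{eq:tr=coshk}); hence $\varGamma'_\W=0$ for disconnected $\W$, and for connected $\W$ not containing both $i$ and $j$ the $\kappa$-dependence is a multiplicative $\cosh$ factor that is killed by the mixed derivative, leaving only connected clusters with $\{i,j\}\subset\W$ in \eqref{eq:cor=}. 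Second, the normalization $2^{n-N}$ in \eqref{eq:cZ=} exactly cancels $\norm{E_n}_1=2^{N-n}$ in the H\"older estimate \eqref{eq:DZ<n}, giving $\abs{\lr{\mathcal{D}_\V\cZ'}_0}\le\beta^{\abs{\V}}$ (and the same with one or two $\kappa$-derivatives) \emph{independently of $n$} --- this identity is the precise sense in which measuring arbitrarily many qubits only ``reweights clusters locally.'' Since the combinatorial coefficient $\chi$ in \eqref{eq:gamma<betam} is untouched by the insertions, \eqref{eq:Gamma<} generalizes with $m^2$ in place of $m$, and together with the cluster count \eqref{eq:Cm} the geometric sum yields \eqref{eq:cor<}. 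Had you carried out these two estimates --- factorization of $E_n$ over disconnected supports, and the $n$-independent normalized trace bound --- your outline would be a complete proof.
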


\section{Algorithm}
\label{sec:samp2comp}
We will prove Theorem \ref{thm:main} constructively: i.e., provide the explicit algorithm that gives efficient sampling. The algorithm is inspired by the following idea: suppose that we could easily compute the marginal probabilities $p(x_j|\bm{x}_{j-1})$, where $\bm{x}_{j-1}=(x_1,\ldots, x_{j-1})$.  We have assumed an (arbitrary) ordering of the qubits.  Since \begin{equation}
    p(\bm{x}) = p(x_1)p(x_2|x_1)\cdots p(x_N|\bm{x}_{N-1}),
\end{equation}
we could then (in $\mathrm{O}(N)$ time) calculate the actual distribution $p(\bm{x})$.  Of course, what we would really like is just to \emph{sample} from the distribution $p(\bm{x})$, but that is also readily done.  We simply pick $x_1$ with probability $p(x_1)$, and conditioned on our result, pick $x_2$ with probability $p(x_2|x_1)$, etc.  This algorithm also has $\mathrm{O}(N)$ runtime, and -- by construction -- samples \emph{exactly} from the desired distribution $p(\bm{x})$.

Our key observation is that while we do not know how to exactly calculate $p(x_j|\bm{x}_{j-1})$ in general, we can calculate it \emph{approximately}.  Our first lemma shows that we only need to calculate these marginals with polynomial accuracy, to sample from $p(\bm{x})$ with accuracy (\ref{eq:p'-p<poly}).


\begin{lem}
    Suppose for each substring $\bm{x}_n$, a number $0 \le p'(0| \bm{x}_n) \le 1$ is known such that  \begin{equation}\label{eq:p'-p<e}
    \abs{p'(0| \bm{x}_n) - p(0| \bm{x}_n)}\le \epsilon\sim N^{-\alpha}.
\end{equation}
Then, if $\alpha>1$, the distribution \begin{equation}
  p'(\bm{x}) = p'(x_1)p'(x_2|x_1)\cdots p'(x_N | \bm{x}_{N-1})
\end{equation}
satisfies \eqref{eq:p'-p<poly}.
\end{lem}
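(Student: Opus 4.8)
The plan is a standard hybrid (telescoping) argument: I will bound the global $L^1$ error $\norm{p'-p}_1$ by the accumulated local errors of the $N$ individual conditionals, each of which is controlled by \eqref{eq:p'-p<e}. First I would introduce, for each integer $0\le k\le N$, the interpolating distribution that uses the approximate conditionals for the first $k$ sites and the exact conditionals for the rest,
\[
q_k(\bm{x}) := \mlr{\prod_{j\le k} p'(x_j|\bm{x}_{j-1})}\mlr{\prod_{j>k} p(x_j|\bm{x}_{j-1})},
\]
so that $q_0=p$ and $q_N=p'$. Each $q_k$ is a genuine probability distribution because, taking $p'(1|\bm{x}_n):=1-p'(0|\bm{x}_n)\in[0,1]$ by hypothesis, every conditional factor sums to one. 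The triangle inequality applied to the telescoping identity $q_N-q_0=\sum_{k=1}^N(q_k-q_{k-1})$ then gives $\norm{p'-p}_1\le\sum_{k=1}^N\norm{q_k-q_{k-1}}_1$.

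The second step is to evaluate each term $\norm{q_k-q_{k-1}}_1$. Since $q_k$ and $q_{k-1}$ share every factor except the $k$-th conditional, their difference factorizes as $\lr{\prod_{j<k}p'(x_j|\bm{x}_{j-1})}\,\mlr{p'(x_k|\bm{x}_{k-1})-p(x_k|\bm{x}_{k-1})}\,\lr{\prod_{j>k}p(x_j|\bm{x}_{j-1})}$. Taking absolute values and summing over $\bm{x}$, I would perform the sum over $x_N,x_{N-1},\ldots,x_{k+1}$ first: because each exact conditional $p(x_j|\bm{x}_{j-1})$ is normalized, these sums collapse to one in turn. Summing next over $x_k\in\{0,1\}$ yields $\sum_{x_k}\abs{p'(x_k|\bm{x}_{k-1})-p(x_k|\bm{x}_{k-1})}=2\abs{p'(0|\bm{x}_{k-1})-p(0|\bm{x}_{k-1})}\le 2\epsilon$ by \eqref{eq:p'-p<e}. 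The leftover sum over $x_1,\ldots,x_{k-1}$ of $\prod_{j<k}p'(x_j|\bm{x}_{j-1})$ equals one, leaving $\norm{q_k-q_{k-1}}_1\le 2\epsilon$.

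Combining the two steps gives $\norm{p'-p}_1\le 2N\epsilon$. With $\epsilon\sim N^{-\alpha}$ this is $\order(N^{1-\alpha})$, which for $\alpha>1$ is of order $N^{-(\alpha-1)}=1/\poly{N}$, establishing \eqref{eq:p'-p<poly}.

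I do not expect a genuine obstacle here; the argument is purely combinatorial. The one point requiring care is the bookkeeping of the telescoping sum: that the ``trailing'' exact conditionals and the ``leading'' approximate conditionals each normalize to one independently, so that the error from the single swapped conditional at position $k$ is isolated and never compounds multiplicatively across sites. This is exactly what rules out the naive worst case in which $N$ factors of $\lr{1+\order(\epsilon)}$ might have been feared to blow up; instead the per-site errors merely add, producing the benign linear-in-$N$ dependence.
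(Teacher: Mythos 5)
Your proposal is correct and is essentially the paper's own argument: the paper's Eq.~\eqref{eq:p-p=tele} is exactly your telescoping/hybrid decomposition (merely with the roles mirrored --- exact conditionals on the leading sites and approximate ones trailing, rather than your convention), and both proofs then collapse the normalized conditionals on either side of the swapped factor to obtain the same bound $\norm{p'-p}_1\le 2\epsilon N \sim N^{1-\alpha}$. Your explicit observation that $\sum_{x_k}\abs{p'(x_k|\bm{x}_{k-1})-p(x_k|\bm{x}_{k-1})}=2\abs{p'(0|\bm{x}_{k-1})-p(0|\bm{x}_{k-1})}$ is the same factor of $2$ the paper absorbs in its final inequality, so nothing is missing.
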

\begin{proof}
We follow \cite{samp2comp17}.
    Let $p'(1| \bm{x}_n) = 1- p'(0| \bm{x}_n)$. 
    To bound the total variation distance $\lVert p-p'\rVert_1$, we re-write \begin{align}\label{eq:p-p=tele}
    &p'(\bm{x})-p(\bm{x}) = \nonumber\\ & p(x_1)\cdots p(x_{N-1}| \bm{x}_{N-2})
    \mlr{p'(x_N| \bm{x}_{N-1}) - p(x_N| \bm{x}_{N-1})} \nonumber\\ &+ \cdots 
    + \mlr{p'(x_1)-p(x_1)}p'(x_2|x_1)\cdots p'(x_N| \bm{x}_{N-1}).
\end{align}
We now observe that:
\begin{align}\label{eq:sump<2e}
  \textstyle{\sum_{\boldsymbol{x}}}&| p'(\bm{x}) - p(\bm{x})| \le   \textstyle{\sum_{\bm{x}}} \sum_j p(x_1)\cdots p(x_{j-1}|\bm{x}_{j-2}) \notag \\
  & \cdot | p'(x_j|\bm{x}_{j-1}) - p(x_j|\bm{x}_{j-1})| p'(x_{j+1}|\bm{x}_j) \cdots p'(x_N|\bm{x}_{N-1}) \notag \\
  &\le \epsilon \textstyle{\sum_j \sum_{x_1\cdots x_{j}}}p(x_1)\cdots p(x_{j-1}|\bm{x}_{j-2}) \le 2\epsilon N.
\end{align} 
The first inequality is simply the triangle inequality; the second uses that the iterative sum over marginals for $x_N,\ldots, x_{j+1}$ gives exactly 1, and invokes (\ref{eq:p'-p<e}); the final inequality follows analogously to the second, together with the fact that the sum over $j$ contains $N$ terms. 
\end{proof} 

\section{Calculating marginal probabilities}
It remains to show that, using polynomial classical resources and runtime, we can calculate an approximation of the marginal $p(0|\bm{x}_n)$ with error $N^{-\alpha}$ for any $\bm{x}_n$. We emphasize that we do not need to compute the marginal for all substrings, just the one given the string we have found so far in any given run of the algorithm.

Similar to \eqref{eq:px=}, we have  \begin{equation}\label{eq:marginal0}
    p(0|\bm{x}_n) = \frac{ \tr{E_{x_1}\cdots E_{x_n} E_{0_j} \ee^{-\beta H}} }{ \tr{E_{x_1}\cdots E_{x_n} \ee^{-\beta H}}},
\end{equation}
where we set $j=n+1$. In a nutshell, we want to calculate the local density matrix on qubit $j$, after measuring the previous $n$ qubits with outcome $\bm{x}_n$.  Recent work using cluster expansions \cite{learn_highT21,Loschmidt_echo23} has shown that such few-qubit reduced density matrices can be accurately estimated in the absence of the extensive product $E_{x_1}\cdots E_{x_n}$.  The basic strategy is to notice that \begin{align}
    2&p(0|\bm{x}_n)-1 =\notag \\
    &\left.\frac{\partial}{\partial \kappa} \log \left(\tr{E_{x_1}\cdots E_{x_n}\mathrm{e}^{\kappa Z_j} \mathrm{e}^{-\beta H }}\right)\right|_{\kappa=0}.
\end{align}
The logarithm of a partition function, or free energy, can be efficiently estimated by cluster expansion at high $\beta$: the terms which contribute to this sum correspond to traces over products of $H_a$s that form \emph{connected clusters}.  Moreover, since we are interested only in the $\kappa$ derivative of this sum, we can focus on connected clusters that include site $j$. Notice that these facts do not depend on the projectors $E_{x_1}\cdots E_{x_n}$.  Therefore, we can extend Theorem 3.1 of \cite{learn_highT21} to derive:
\begin{lem}\label{lem:clus}
    For $\beta$ obeying \eqref{eq:beta*}, the marginal is given by
    \begin{equation}\label{eq:p=beta}
        p(0|\bm{x}_n) = \frac{1}{2}+ \frac{1}{2}\sum_{m=1}^\infty \gamma_m \beta^m,
    \end{equation}
    where $\gamma_m$ is bounded: \begin{equation}\label{eq:gamma<beta}
        \abs{\gamma_m} \le m \beta_*^{-m},
    \end{equation}
    with $\beta_*$ given in \eqref{eq:beta*}. This implies the series \eqref{eq:p=beta} converges absolutely. Moreover, $\gamma_m$ can be computed in time 
    $\order\mlr{\exp(cm)}$ where $c=\log\lr{ \fkd 2^{2k+1}}$ is a constant; recall the definitions of $\fkd$ and $k$ below \eqref{eq:H=Ha}.
\end{lem}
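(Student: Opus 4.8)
The plan is to obtain the series \eqref{eq:p=beta} from a linked-cluster expansion of the free energy and then to bound its coefficients by combining a tree-graph bound on the cluster weights with a counting bound on the number of connected clusters. First I would start from the identity below \eqref{eq:marginal0}, writing $2p(0|\bm{x}_n)-1 = \partial_\kappa \log \tr{P \ee^{\kappa Z_j}\ee^{-\beta H}}\big|_{\kappa=0}$ with $P := E_{x_1}\cdots E_{x_n}$, and Taylor-expanding $\ee^{-\beta H}=\sum_{m\ge0}\frac{(-\beta)^m}{m!}\sum_{a_1,\dots,a_m}\lambda_{a_1}\cdots\lambda_{a_m}H_{a_1}\cdots H_{a_m}$. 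Grouping the resulting sequences of terms by the connectivity of the supports $S_{a_1}\cup\cdots\cup S_{a_m}$ in the interaction graph organizes $\tr{P\ee^{\kappa Z_j}\ee^{-\beta H}}$ as a sum over \emph{clusters}: multisets of terms whose supports form a connected region.

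Next I would invoke the abstract cluster (Mayer--Ursell) expansion, whose only structural hypothesis is that the partition function is \emph{multiplicative over spatially disjoint regions}. This holds here because $P$ is a product of commuting single-site projectors: on any region $R$ disjoint from the complementary support, the functional $\tr{P\,\cdot}$ factorizes into a product of traces over $R$ and $R^c$. Hence the extensive insertion $P$ does \emph{not} obstruct the expansion, and $\log\tr{P\ee^{\kappa Z_j}\ee^{-\beta H}}$ equals a sum over connected clusters with Ursell weights. Differentiating in $\kappa$ at $\kappa=0$ annihilates every cluster whose support avoids the site $j$, since those weights are $\kappa$-independent; collecting the surviving connected clusters $W\ni j$ by total size $m$ defines $\gamma_m$ and yields \eqref{eq:p=beta}. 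This is exactly the point at which Theorem~3.1 of \cite{learn_highT21}, stated for $\tr{\ee^{-\beta H}}$, is extended: every step survives the replacement $\tr{\cdot}\to\tr{P\,\cdot}$ because $P$ acts as a product of local, norm-$\le1$, positive maps that preserve geometric locality.

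For the bound \eqref{eq:gamma<beta} I would bound the Ursell weight of a size-$m$ cluster by the tree-graph inequality, using $\abs{\lambda_a}\le1$ together with the fact that the normalized traces of the Pauli products are bounded by $1$ in magnitude. The number of connected clusters of size $m$ containing $j$ is then controlled by growing the cluster greedily, each new term overlapping the current support in at most $\fkd$ ways, which gives a branching bound exponential in $m$ with base set by $\fkd$. The standard Koteck\'y--Preiss bookkeeping balances this combinatorial growth against the analytic smallness to produce $\abs{\gamma_m}\le m\beta_*^{-m}$ with $\beta_*=\mlr{2\ee^2\fkd(\fkd+1)}^{-1}$, hence absolute convergence for $\beta<\beta_*$. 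The same enumeration---branching $\fkd$ per added term, with at most $2^{2k+1}$ additional bookkeeping per term to evaluate the local traces---produces each $\gamma_m$ in time $\order\mlr{\exp(cm)}$ with $c=\log\lr{\fkd 2^{2k+1}}$.

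The main obstacle I expect is not the expansion itself but extracting the \emph{sharp} constant $\beta_*$: the count of connected clusters and the per-cluster weight bound must be matched precisely so that their product is exactly $m\,\mlr{2\ee^2\fkd(\fkd+1)}^{-m}$, and one must confirm that inserting the extensive projector $P$ leaves both the tree-graph bound and the cluster count unchanged. This is plausible precisely because $P$ is local, positive, and norm-bounded, but it needs to be verified term by term rather than quoted directly from \cite{learn_highT21}.
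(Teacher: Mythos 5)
Your analytic outline tracks the paper's proof essentially step for step: the same log-derivative identity (the paper's \eqref{eq:2p-1=log}--\eqref{eq:cZ=}, with $\cZ=2^{n-N}\tr{E_n\ee^{\kappa E_j}\ee^{-\beta H}}$), the same two observations that make the projector insertion harmless --- the trace factorizes over disconnected supports because $E_n$ is a tensor product of single-site projectors (the paper's \eqref{eq:disconn_factor}, which kills $\varGamma_\W$ for disconnected $\W$), and clusters avoiding $j$ drop out of $\partial_\kappa$ at $\kappa=0$ (\eqref{eq:tr=coshk}) --- and the same weight-times-count structure for \eqref{eq:gamma<beta}. Two caveats. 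First, the paper controls the weights not by positivity of $\tr{E_n\,\cdot\,}$ but by H\"older, $\abs{\lr{\mathcal{D}_\V\cZ}_0}\le 2^{n-N}\norm{E_n}_1\norm{H_{a_1}\cdots H_{a_m}}_\infty\,\beta^{\abs{\V}}=\beta^{\abs{\V}}$ as in \eqref{eq:DZ<n}; this is the precise form of your ``norm-$\le 1$'' claim. Second, the paper does not run Koteck\'y--Preiss/tree-graph machinery: it imports from \cite{learn_highT21} the bound on the combinatorial coefficients $\chi(\W,\{\V_\ell\})$, giving the per-cluster estimate \eqref{eq:Gamma<}, $\abs{\lr{\partial_\kappa\varGamma_\W}_0}\le m\mlr{2\ee(\fkd+1)}^m$, which multiplied by the cluster count \eqref{eq:Cm}, $C_m\le(\ee\fkd)^m$, yields exactly $m\beta_*^{-m}$. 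A generic KP condition would give absolute convergence at \emph{some} $\beta=\order(\fkd^{-2})$ but not automatically the stated $\beta_*$; you correctly flag this, so I regard it as an acknowledged incompleteness rather than an error.

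The genuine gap is the computability claim. Enumerating connected clusters through $j$ with branching $\fkd$ accounts for the $C_m$ factor, but for each cluster you must still evaluate $\lr{\partial_\kappa\varGamma_\W}_0$, which by \eqref{eq:gamma=chi} is a signed sum over partitions of $\W$ into connected sub-clusters --- naively far more than $\exp(\order(m))$ terms --- and ``at most $2^{2k+1}$ additional bookkeeping per term to evaluate the local traces'' is not an algorithm for this sum. The paper (following Section 3.6 of \cite{learn_highT21}) sidesteps the partition sum entirely: $\varGamma_\W$ is extracted as a multivariate derivative of $\log\cZ$, which reduces to computing the quantities \eqref{eq:MK=}, i.e.\ traces $\tr{E_n\ee^{\kappa E_j}\mathcal{M}^{\mathcal{K}}}$ of powers of the sparse matrix $\mathcal{M}=\sum_{a\in\W}z_aH_a$ restricted to the at most $1+m(k-1)$ qubits supporting $\W$ (Hilbert-space dimension $D=\order[2^{m(k-1)}]$, cost $\order(D^2)$ per multiplication), with the $\kappa$-expansion truncated at first order since only one $\kappa$-derivative is needed. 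That device is where the $2^{m(2k-1)}$ of \eqref{eq:m2k-1}, and hence $c=\log\lr{\fkd 2^{2k+1}}$, actually come from. Without it (or an equivalent), your proposal establishes \eqref{eq:p=beta} and, modulo the constant-matching, \eqref{eq:gamma<beta}, but not the stated $\order\mlr{\exp(cm)}$ runtime for $\gamma_m$.
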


The proof of Lemma \ref{lem:clus} is technical (but constructive), and we explain it in Appendix \ref{app:A}.  With this lemma in hand, we can explain how to quickly sample from $p'(\bm{x})$.  One approximates $p(0|\bm{x}_n)$ for $n=0,\ldots,N-1$ by simply truncating the sum in \eqref{eq:p=beta} to $m\le M= \eta\log N$. Using the bound \eqref{eq:gamma<beta}, $\eta$ is related to $\alpha$ defined in \eqref{eq:p'-p<e} by \begin{equation}
    \lr{\log N} (\beta/\beta_*)^M \lesssim N^{-\alpha}, 
\end{equation}
meaning that \begin{equation}
    \eta > \alpha \lr{\log \frac{\beta_*}{\beta}}^{-1}.
\end{equation}
According to Lemma \ref{lem:clus}, this takes computation time $\order\mlr{\exp(cM)} = \order(N^{c\eta})$.
Combining this computation algorithm with the reduction in Section \ref{sec:samp2comp} yields the advertised sampling algorithm in Theorem \ref{thm:main}, with runtime $\order(N^{1+c\eta})$ to output one string $\bm{x}$.

\section{Generalization to adaptive protocols}
Previously we considered a thought-experiment where we measured the Gibbs state in the computational basis; the result was easy to sample.  In quantum mechanics, we might measure individual qubits in arbitrary bases, and change the basis of future measurement based on previous outcomes. Such adaptive protocols play a critical role in error correction and measurement-based quantum computation. Nevertheless, at sufficiently high temperature (\ref{eq:beta*}), we now show that classically simulating this quantum adaptive protocol is easy, at high temperature.

Suppose one first measures qubit $n=1$ in some basis, and then depending on the measurement outcome $x_1\in \{0,1\}$, qubit $n=2$ is measured in an adaptive basis to get $x_2$. Then one measures $n=3$ in a basis determined by the previous outcomes $x_1,x_2$, and so on. This defines adaptive local projectors $E_{x_1\cdots x_n}$ acting on qubit $n$, and the outcome binary string $\bm{x}$ is sampled from probability $p(\bm{x})$ defined by marginals \begin{equation}\label{eq:marginal}
    p(x_{n+1}|\bm{x}_n) = \frac{ \tr{E_{\bm{x}_{n+1} } \ee^{-\beta H}} }{ \tr{E_{\bm{x}_n } \ee^{-\beta H}}},
\end{equation}
where $E_{\bm{x}_n }:= E_{x_1}E_{x_1x_2}\cdots E_{x_1\cdots x_n}$. Note that which qubit to measure next can depends on previous measurement outcomes, where $n$ would become a dynamical label.

\comment{
This measurement procedure is described as follows. First, a projective measurement is done on a chosen qubit that we label by $n=1$, such that depending on the measurement outcome $x_1\in \{0,1\}$, it is projected to the local state $\ket{e_{x_1}}_1$ ($\{\ket{e_0}_1,\ket{e_1}_1\}$ is a local basis). Mathematically, denote the measurement projector $E_{x_1}:= \ket{e_{x_1}}_1 \bra{e_{x_1}}$, the global state becomes $E_{x_1}\ee^{-\beta H} E_{x_1} $ up to normalization. The probability to get outcome $x_1$ is \begin{equation}
    p(x_1) = \frac{ \tr{E_{x_1}\ee^{-\beta H}} }{ \tr{\ee^{-\beta H}}}.
\end{equation}
Second, measure qubit $n=2$ in a local basis $\{\ket{e_{x_1,0}}_2,\ket{e_{x_1,1}}_2\}$ that may depend on the previous outcome $x_1$. This second qubit need not be close to qubit $n=1$ in the graph. Let $E_{x_1x_2}:= \ket{e_{x_1x_2}}_2\bra{e_{x_1x_2}}$, this measurement changes the state to $E_{x_1x_2} E_{x_1}\ee^{-\beta H} E_{x_1} E_{x_1x_2}$ with conditional probability (given $x_1$) \begin{equation}
    p(x_2|x_1) = \frac{p(x_1x_2)}{p(x_1)} = \frac{ \tr{E_{x_1x_2} E_{x_1}\ee^{-\beta H}} }{ \tr{E_{x_1} \ee^{-\beta H}}}.
\end{equation}
Then given $x_1,x_2$, one determines the local basis to measure $n=3$, and so on.
}

Assuming the adaptive local basis can be computed efficiently from previous measurement outcomes, we claim Theorem \ref{thm:main} still holds in this adaptive case. The reason is that one can still use the algorithm in Section \ref{sec:samp2comp}, where marginals are computable using Lemma \ref{lem:clus}, which we prove for the general adaptive case in Appendix \ref{app:A}. 
The proof of Theorem \ref{thm:nocorr} uses similar methods, and is found in Appendix \ref{app:B}.

\section{Outlook}
We provide an efficient (polynomial runtime in $N$, the number of qubits) algorithm for sampling from quantum Gibbs states \emph{on a classical computer}.  There exists a small (but still O(1)) $\beta$ such that our algorithm runtime is $t\sim N^{1+\delta}$ for any constant $\delta>0$.  Our proof is inspired by recent technical advances \cite{learn_highT21,Loschmidt_echo23} in applying combinatorial cluster expansion techniques to many-body quantum systems.  Our work provides a rigorous foundation for the physically intuitive idea that, at very high temperature, quantum systems are in a ``disordered phase" that is easy to classically simulate, together with explicit algorithms for such simulation. 

Our algorithm may provide an interesting alternative and/or subroutine for quantum Monte Carlo methods.  After all, while Theorem \ref{thm:main} gives a bound on when our algorithm is \emph{guaranteed} to perform well, it might be the case that in certain models, our sampling algorithm remains efficient down to lower temperatures.  We hope that these ideas can find practical applications in large-scale computation in the coming years.

Using generalizations of the Jordan-Wigner transformation \cite{Bravyi_2002,cirac,setia,derby}, we can encode local Hamiltonians involving fermion operators using local qubit Hamiltonians on an enlarged Hilbert space.  However, such mappings encode the fermionic states into an entangled subspace of Hilbert space, meaning that our sampling algorithm in a product basis of the qubit Hilbert space will not immediately apply.  It would be interesting to generalize our methods to such fermionic models.

We will report in upcoming work \cite{us_metrology} that it is classically efficient to sample from the probability distribution $|\langle \bm{x}|\psi\rangle|^2$, where $\ket{\psi}=\ee^{-\ii t H} \ket{\bm{0}}$ is a time-evolved product state for a short time $t$ and $|\bm{x}\rangle$ is sampled from certain bases, along with a generalization to settings where $H(t)$ is time-dependent.  Since it is known that the simulation becomes hard in the worst case if $t$ is larger than some constant (e.g. if $\ket{\psi}$ is a cluster state that enables measurement-based quantum computation \cite{oneway_qc01}), this also demonstrates a sharp transition in the simulatability of quantum evolution in real time.

\section*{Acknowledgements}
This work was supported by the Alfred P. Sloan Foundation under Grant FG-2020-13795 (AL) and by the U.S. Air Force Office of Scientific Research under Grant FA9550-21-1-0195 (CY, AL).

\begin{appendix}
    \section{Proof of Lemma \ref{lem:clus}}\label{app:A}

We prove \eqref{eq:p=beta} directly for the general adaptive case \eqref{eq:marginal}. We use the shorthand $j=n+1$, and: \begin{equation}\label{eq:En=}
    E_n=E_{\bm{x}_n},\quad E_j=2E_{x_1\cdots x_n0_{j}}-I, \quad p=p(0|\bm{x}_n).
\end{equation}
We closely follow \cite{learn_highT21}, whose Theorem 3.1 establishes \eqref{eq:p=beta} for the case $n=0$. The central technique is called \emph{cluster expansion}; key  technical ingredients are found in \cite{Loschmidt_echo23}, and the method was sketched in the main text. The whole proof in \cite{learn_highT21} is rather long (and uses slightly different notation), but only several places need to be modified to properly insert $E_n$.  Therefore, we will frequently refer to \cite{learn_highT21}, and only explain the steps of their work that require modification (along with introducing requisite notation to explain the relevant modifications). For the reader familiar with \cite{learn_highT21}, our Lemma may seem to be a natural generalization.

To exploit the local nature of $H$ in \eqref{eq:H=Ha}, we need some definitions for clusters.  An example is sketched in Fig.~\ref{fig:clus} to help familiarize the reader with the construction.

\begin{figure}[t]
\centering
\includegraphics[width=.4\textwidth]{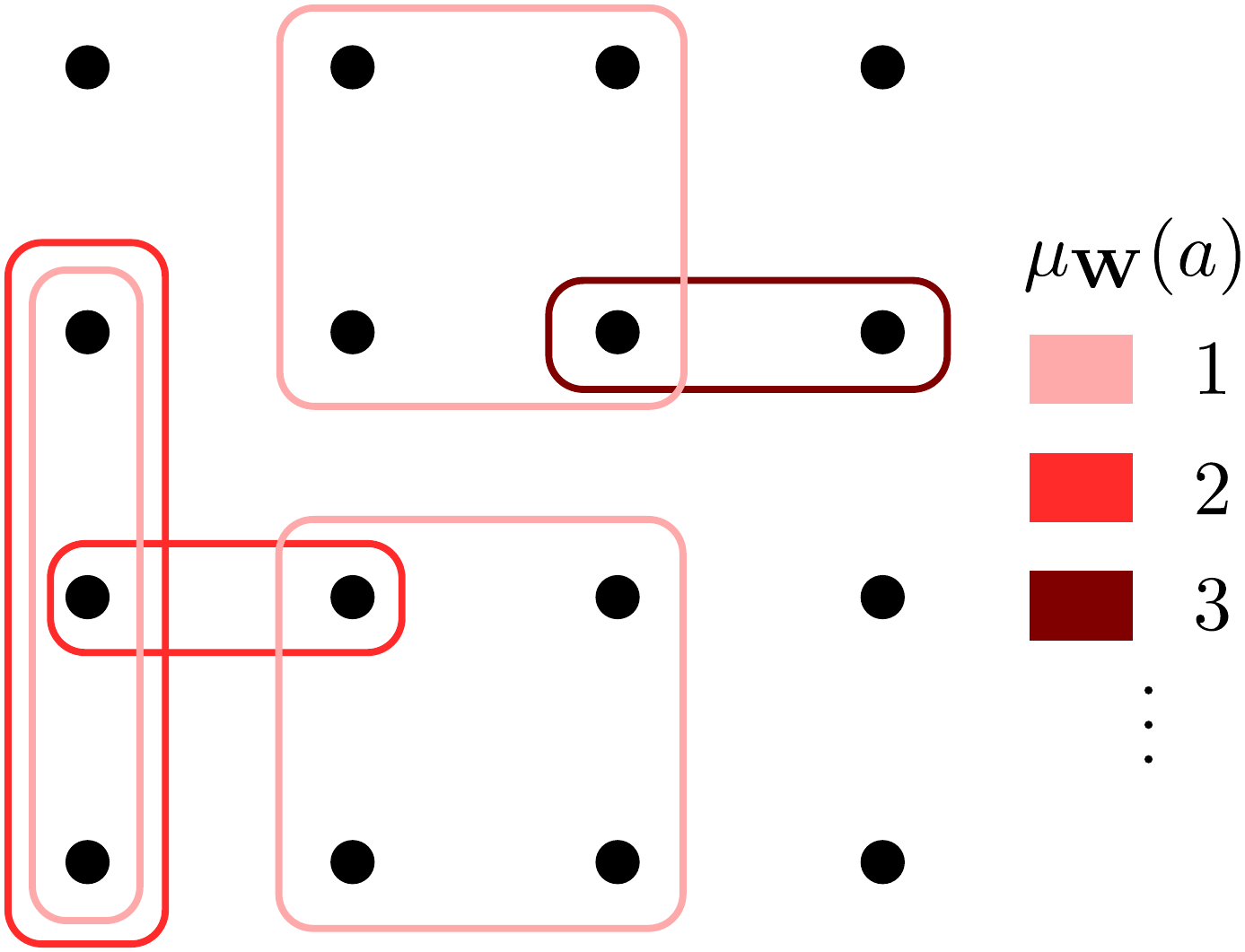}
\caption{\label{fig:clus} Sketch of a cluster $\W$ on a 2d square lattice (black dots). Each Pauli string $a$ contained in $\W$ is shown by a square loop, whose color represents its multiplicity $\mu_\W(a)$. As shown on the very left, two Pauli strings can be supported on an identical set of qubits. One can verify $|\W|=10$ and $\W!=24$ for this example. The cluster $\W$ is disconnected, and is the union $\W=\V_1\cup \V_2$ of two connected clusters, where $\V_1$ ($\V_2$) lies on the lower left (upper right). }
\end{figure}

\begin{defn}
A cluster $\V$ is a multiset of $a$s, which assigns a multiplicity $\mu_\V(a)\in \{0,1,\cdots\}$ to each $a$. We also denote $\V=\{a,\cdots,a,b,\cdots\}$, where $a$ appears $\mu_\V(a)$ times. We say $\V$ contains $a$ (denoted by $a\in \V$) if $\mu_\V(a)\ge 1$. Define the total weight of $\V$ as $|\V|:=\sum_a \mu_\V(a)$, while $\V!:=\prod_a \lr{\mu_\V(a)!}$ (note the difference to $|\V|!$) and $\lambda^\V:=\prod_a \lambda_a^{\mu_{\V}(a)}$. 

Clusters can union with each other. For example, the cluster $\W=\V_1\cup \V_2$ is defined by $\mu_\W(a)=\mu_{\V_1}(a)+\mu_{\V_2}(a), \forall a$. 

A cluster $\W$ is called disconnected, if the $a$s contained in $\W$ can be separated to two sets $A_1,A_2$, such that any pair $a_1\in A_1,a_2\in A_2$ from the two sets do not overlap in their support qubits. Otherwise $\W$ is connected. 
\end{defn}

The proof idea is to first convert the fraction \eqref{eq:p=beta} to a logarithm that is easier to deal with. 
One can explicitly verify (generalizing Proposition 3.2 in \cite{learn_highT21})  \begin{equation}\label{eq:2p-1=log}
    2p-1 = \left.\partial_{\kappa} \log \cZ\right|_{\kappa=0},
\end{equation}
where \begin{equation}\label{eq:cZ=}
    \cZ:= 2^{n-N} \tr{E_n \mathrm{e}^{\kappa E_j} \ee^{-\beta H}},
\end{equation}
and $E_j$ is a Pauli operator on $j$ due to \eqref{eq:En=}.
We proceed to find the $\beta$-expansion for the ``partition function'' $\cZ$ (c.f. (47) in \cite{learn_highT21})
\begin{align}\label{eq:Z=DZ}
    \cZ =& 2^{n-N}\mlr{ \tr{E_n\mathrm{e}^{\kappa E_j}} - \beta \tr{E_n\mathrm{e}^{\kappa E_j}\textstyle{\sum_a} \lambda_a H_a}+ \cdots} \nonumber\\
        =& \cosh\kappa + \sum_{m=1}^\infty \sum_{\V: \abs{\V}=m} \frac{\lambda^\V}{\V!} \mathcal{D}_\V \cZ.
\end{align}
The sum is over all clusters $\V$, where
\begin{equation}\label{eq:DZ=}
    \mathcal{D}_\V \cZ:= 2^{n-N} \frac{(-\beta)^m}{m!} \sum_{\sigma\in \mathsf{S}_m} \tr{E_n \mathrm{e}^{\kappa E_j}H_{a_{\sigma_1}} \cdots H_{a_{\sigma_m}} },
\end{equation}
with $\V=\{a_1,\cdots,a_m\}$, and $\mathsf{S}_m$ being the permutation group of $m$ elements. Intuitively, $\mathcal{D}_\V \cZ$ is a derivative of $\cZ$ over multi-variables $\{\lambda_a\}$ at $\lambda_a=0$.
To get \eqref{eq:Z=DZ}, we have expanded $H$ to local Paulis $H_a$ by \eqref{eq:H=Ha}, used $\tr{E_n}=2^{N-n}$, and organized terms into clusters according to \eqref{eq:DZ=}. The $\V!$ factor appears in \eqref{eq:Z=DZ} because, for example, there is only one $H_a H_a$ term in $H^2$, but that term is counted twice in \eqref{eq:DZ=}.

Taking the logarithm of \eqref{eq:Z=DZ} and expanding in $\beta$, each final term may correspond to multiple clusters $\V_1,\V_2,\cdots$ in \eqref{eq:Z=DZ}, which comes from higher orders in the expansion $\log(1+y)=y-y^2/2+\cdots$ with $y=\sum_{\V}\cdots$. These multiple clusters combine to a single one $\W=\V_1\cup\V_2\cup\cdots$, and we organize the final terms according to the combined cluster: (see (25) in \cite{learn_highT21}) \begin{align}\label{eq:logZ=}
    \log\cZ =\log\cosh\kappa +  \sum_{m=1}^\infty \beta^m \sum_{\W: \abs{\W}=m}\lambda^\W \varGamma_{\W}.
\end{align}
where $\varGamma_{\W}$ do not depend on $\beta$ and $\{\lambda_a\}$. Schematically, \begin{align}\label{eq:gamma=chi}
    \beta^m \varGamma_\W = \sum_{\{\V_\ell\} } \chi(\W, \{\V_\ell\}) \prod_{\ell=1}^L \frac{ \mathcal{D}_{\V_\ell} \cZ}{\cosh\kappa},
\end{align}
where $\{\V_\ell\}$ denotes a multiset of connected clusters $\V_1,\V_2,\cdots,\V_L$ with $L\le m$ that partitions the entire cluster $\W$,\footnote{This partition has several constraints, which we do not elaborate here. See \cite{learn_highT21,Loschmidt_echo23}. }
 and the coefficient $\chi(\W, \{\V_\ell\})$ is determined by geometry alone.
Using \eqref{eq:2p-1=log} to compare to \eqref{eq:p=beta}, we identify 
\begin{equation}\label{eq:gamma=W}
    \gamma_m = \sum_{\W\ni j: \abs{\W}=m} \lambda^\W \lr{\partial_\kappa \varGamma_\W}_0,
\end{equation}
where $\lr{\cdots}_0$ means taking $\kappa=0$.
Here the sum is constrained to clusters $\W$ containing $j$, because if $j\notin \W$, $\mathcal{D}_{\V_\ell}\cZ$ in \eqref{eq:gamma=chi} will depend on $\kappa$ simply by a multiplicative factor $\cosh \kappa$ according to \eqref{eq:DZ=}: \begin{equation}\label{eq:tr=coshk}
    \tr{E_n \mathrm{e}^{\kappa E_j} O} \propto \tr{E_n \mathrm{e}^{\kappa E_j}} \tr{E_n O}\propto \cosh \kappa,
\end{equation}
if $O=H_{a_{\sigma_1}}\cdots$ does not act on $j$. This factor cancels the $\kappa$-dependence in the denominator of \eqref{eq:gamma=chi}, thus taking the $\kappa$-derivative simply yields zero.

Although general expressions for $\varGamma_\W$ are complicated, we know it always \emph{vanishes} if $\W$ is \emph{disconnected}: Suppose the $a$s contained in $\W$ belong to two disconnected sets of qubits $A_1,A_2$, and we set $\lambda_a=0$ for all $a\notin \W$ \footnote{We are free to do so, because the term $\varGamma_\W$ is still included in the logarithm of the restricted partition function.}, so that $H=H_1+H_2$ where $H_1$ ($H_2$) only acts inside $A_1$ ($A_2$). Then similar to \eqref{eq:tr=coshk}, the trace factorizes \begin{align}\label{eq:disconn_factor}
    \tr{E_n \ee^{-\beta H}} &= \Tr\mlr{ \lr{ E_{n,1} \ee^{\kappa E_j} \ee^{-\beta H_1} }\otimes \lr{ E_{n,2}\ee^{-\beta H_2}} } \nonumber\\
    &\propto \tr{E_n \ee^{\kappa E_j} \ee^{-\beta H_1}} \tr{E_n \ee^{-\beta H_2}},
\end{align}
where we have expressed the single-site projectors $E_n=E_{n,1}\otimes E_{n,2}$ with $E_{n,1}$ ($E_{n,2}$) acting inside (outside) $A_1$, and assumed $j\in A_1$ without loss of generality.
As a result, $\log\cZ=\zeta_1 + \zeta_2$, where $\zeta_{1,2}$ is a multivariate polynomial of $\lambda_a$ for $a\in A_{1,2}$ alone. There are no cross terms like $\lambda_{a_1}\lambda_{a_2}$ where $a_1\in A_1,a_2\in A_2$, which proves $\varGamma_\W=0$ because it is a coefficient of such cross terms.

Hence, the sum in \eqref{eq:gamma=W} is further restricted to \emph{connected} clusters of total weight $m$ that contains $j$. There are at most \begin{equation}\label{eq:Cm}
    C_m:=\ee \fkd\mlr{1+\ee(\fkd-1)}^{m-1} \le (\ee \fkd)^{m}
\end{equation} 
of them (Proposition 3.6 in \cite{learn_highT21}), which can be enumerated in time $\order(m\fkd C_m)$ (Section 3.4 in \cite{learn_highT21}). 

Therefore, to prove (\ref{eq:gamma<beta}), we only need to show that for each such cluster $\W$,
\begin{equation}\label{eq:Gamma<}
    \abs{\lr{\partial_\kappa \varGamma_{\W}}_0} \le m\mlr{2\ee (\fkd+1)}^{m}.
\end{equation} 
To proceed, we derive from \eqref{eq:gamma=chi} \begin{widetext}
\begin{align}\label{eq:gamma<betam}
    \beta^m \abs{\lr{\partial_\kappa \varGamma_\W}_0} &= \abs{ \sum_{\{\V_\ell\} } \chi(\W, \{\V_\ell\}) \sum_{\ell'=1}^L \lr{\partial_\kappa\mathcal{D}_{\V_{\ell'}} \cZ}_{0} \prod_{\ell=1, \ell\neq \ell' }^L \lr{\mathcal{D}_{\V_\ell} \cZ}_0 } \nonumber\\
    &\le m \sum_{\{\V_\ell\} } \abs{\chi(\W, \{\V_\ell\})} \beta^{\sum_\ell \abs{\V_\ell}} = m \beta^m \sum_{\{\V_\ell\} } \abs{\chi(\W, \{\V_\ell\})}.
\end{align}
\end{widetext}
Here we have used \begin{equation}\label{eq:DZ<}
    \abs{\lr{\partial_\kappa\mathcal{D}_\V \cZ}_0}, \abs{\lr{\mathcal{D}_\V \cZ}_0}\le \beta^{|\V|},
\end{equation}
together with $L\le m$ and the fact that $\{V_\ell\}$ partitions $\W$: $m=\sum_\ell \abs{\V_\ell}$. We will prove \eqref{eq:DZ<} shortly; for now observe that what we need is a bound on the coefficient $\chi$. This is done precisely in \cite{learn_highT21}, where the major chunk of the proof establishes the right hand side of \eqref{eq:Gamma<} as a bound for \eqref{eq:gamma<betam}. Note that our $\chi$ is the same as what can be read out in eq.(50) of \cite{learn_highT21}, since it is merely a combinatorial factor that comes from expanding the logarithm, and does not depend on how $\mathcal{D}_\V \cZ$ is defined in \eqref{eq:DZ=} (where there is no $E_n\ee^{\kappa E_j}$ insertion in \cite{learn_highT21}). 

Now we show \eqref{eq:DZ<}. Let $m=|\V|$.  \eqref{eq:DZ=} leads to \begin{align}\label{eq:DZ<n}
    \abs{\lr{\mathcal{D}_\V \cZ}_0} &\le 2^{n-N} \beta^m \norm{E_n H_{a_1} \cdots H_{a_m}}_1 \nonumber\\
    &\le 2^{n-N} \beta^m \norm{E_n }_1 \norm{H_{a_1} \cdots H_{a_m}}_\infty = \beta^m.
\end{align}
Here in the first line, we have used that the number of permutations is $|\mathsf{S}_m|=m!$ and $\tr{A}\le \norm{A}_1:=\tr{\sqrt{A^\dagger A}}$. In the second line, we have used H\"{o}lder inequality $\norm{AB}_1 \le \norm{A}_1\norm{B}_\infty$, together with $\norm{H_{a_1} \cdots H_{a_m}}_\infty \le \norm{H_{a_1}}_\infty \cdots \norm{H_{a_m}}_\infty=1$ and $\norm{E_n}_1=\tr{E_n}=2^{N-n}$. As one can verify, $\lr{\partial_\kappa\mathcal{D}_\V \cZ}_0$ is bounded similarly by inserting $E_j$ in \eqref{eq:DZ<n}. As a result, \eqref{eq:DZ<} then establishes \eqref{eq:Gamma<} following \cite{learn_highT21}, from which \eqref{eq:gamma<beta} is obtained.

Finally, to prove Lemma \ref{lem:clus}, we need to show that $\lr{\partial_\kappa\varGamma_\W}_0$ can be computed in time \begin{equation}\label{eq:m2k-1}
  t_{\mathrm{compute}}=   \order\mlr{2^{m(2k-1)} \poly{m} }.
\end{equation} 
\cite{learn_highT21} (Section 3.6) shows that $\varGamma_\W$ is computable in time \begin{equation}\label{eq:8mpolym}
     \order(8^m \poly{m}),
 \end{equation} 
 without the $E_n\ee^{\kappa E_j}$ insertion, via taking multivariate derivative $\mathcal{D}_\W$ on $\log \cZ$ (see \eqref{eq:logZ=}). Here we show that the insertion does not change this scaling much. The only subroutine in \cite{learn_highT21} where $E_n \ee^{\kappa E_j}$ would appear is to calculate \begin{equation}\label{eq:MK=}
    \tr{E_n\ee^{\kappa E_j} \mathcal{M}^{\mathcal{K} } } = \tr{E_n \mathcal{M}^{\mathcal{K} } } + \kappa \tr{E_nE_j \mathcal{M}^{\mathcal{K} } }+\cdots,
\end{equation}
for $\mathcal{K}=1,\cdots,m$, where the matrix $\mathcal{M}:=\sum_{a\in \W} z_a H_a$ with $z_a$ being integers. We have expanded \eqref{eq:MK=} in $\kappa$, so that what is really computed is the $c$-number coefficients. This is more complicated than \cite{learn_highT21} where \eqref{eq:MK=} is a single $c$-number $\tr{\mathcal{M}^{\mathcal{K}}}$. However, this only yields an extra constant factor for the total complexity, because only the first derivative in $\kappa$ is needed in the end, so in each step of the calculation one can throw away $\kappa^2$ and higher orders. 
In other words, we only need to show that computing the first two orders in \eqref{eq:MK=} requires a similar runtime as $\tr{\mathcal{M}^{\mathcal{K}}}$, which is $\order(4^m \poly{m})$ shown in \cite{learn_highT21}. 

We focus on the first term in \eqref{eq:MK=}, since the second follows similarly. Observe that computing that term can be restricted to no more than $1+m(k-1)$ qubits that support $\W$, since $E_n$ is a tensor product. This number of qubits is because the connected cluster $\W$ can be generated by starting from one $k$-local term $a\in \W$, and then gradually appending terms in $\W$ to it while keeping connectivity; each time appending one term adds at most $k-1$ support qubits. In other words, the first term in \eqref{eq:MK=} is restricted to a Hilbert space of dimension $D=\order[2^{m(k-1)}]$. Furthermore, the matrix $\mathcal{M}=\sum_a z_a H_a$ to be powered is sparse: each row/column contains at most $m$ nonzero entries, so multiplying any matrix to it yields complexity $\order(D^2)=\order[4^{m(k-1)}]$. This is the dominant complexity of \eqref{eq:MK=} up to $\poly{m}$, so comparing to $\order(4^m)$ without $E_n \ee^{\kappa E_j}$ insertion, the computation time for $\lr{\partial_\kappa\varGamma_\W}_0$ is given by \eqref{eq:m2k-1} comparing to \eqref{eq:8mpolym} following \cite{learn_highT21}.

\section{Proof of Theorem \ref{thm:nocorr}}\label{app:B}

We sketch the proof since it is very similar to Appendix \ref{app:A}. Define $\cZ'$ in a similar way as \eqref{eq:cZ=} with replacement \begin{equation}
    \kappa E_j\rightarrow \kappa_iE_i+\kappa_j E_j,
\end{equation}
where $E_i,E_j$ are the ones achieving the maximum in \eqref{eq:cor<},
and define $\varGamma'_\W$ as the coefficient of the corresponding expansion in \eqref{eq:logZ=}.
Observe that \begin{align}\label{eq:cor=}
    &\mathrm{Cor}(i,j) = \lr{\partial_{\kappa_i} \partial_{\kappa_j} \log \cZ'}_0 \nonumber\\
    &= \sum_{m=1}^\infty \beta^{m} \sum_{\substack{ \W \text{ connected}:\\ \{i,j\}\subset\W, \abs{\W}=m}}\lambda^\W \lr{\partial_{\kappa_i} \partial_{\kappa_j}\varGamma_{\W}'}_0,
\end{align}
Here the first line comes from explicit calculation similar to \eqref{eq:2p-1=log}, while the second line follows from generalizing the arguments around \eqref{eq:tr=coshk} and \eqref{eq:disconn_factor}. For a connected cluster $\W$ that contains $i$ and $j$, the total weight cannot be too small: $m=\mathrm{\Omega}(\mathsf{d}(i,j))$. As a result, \eqref{eq:cor<} then follows from \eqref{eq:Cm} and a generalization of \eqref{eq:Gamma<}: \begin{equation}
    \abs{\lr{\partial_{\kappa_i} \partial_{\kappa_j} \varGamma'_{\W}}_0} \le m^2\mlr{2\ee (\fkd+1)}^{m}.
\end{equation}
This holds by generalizing \eqref{eq:gamma<betam} to two derivatives, and \eqref{eq:DZ<} to \begin{equation}
    \abs{\lr{\partial_{\kappa_i}\partial_{\kappa_j}\mathcal{D}_\V \cZ'}_0}, \abs{\lr{\partial_{\kappa_i}\mathcal{D}_\V \cZ'}_0}, \abs{\lr{\mathcal{D}_\V \cZ'}_0}\le \beta^{|\V|},
\end{equation}
which comes from, e.g. $\lr{\mathcal{D}_\V \cZ'}_0=\lr{\mathcal{D}_\V \cZ}_0$ and \eqref{eq:DZ<n}.

    
\end{appendix}

\bibliography{biblio}
\end{document}